\documentclass[12pt]{article} 

\usepackage{amssymb}
\usepackage{fullpage}
\usepackage{fancybox}
\usepackage{graphicx}

\newtheorem{theorem}{Theorem}
\newtheorem{lemma}{Lemma}
\newtheorem{definition}{Definition}
\newtheorem{corollary}{Corollary}
\newtheorem{remark}{Remark}

\newcommand{\qed}{\rule{0.5em}{1.5ex}}
\newcommand{\fqed}{{\hfill~\qed}}
\newenvironment{proof}{{\noindent \bf Proof.}}
                      {{\hfill \fqed} \vspace{1em}}

\setlength{\textfloatsep}{5ex}


\newcommand{\IR}{\mathbb{R}}

\newcommand{\dist}{\mathord{\it dist}}
\newcommand{\ball}{\mathord{\it ball}}
\newcommand{\ann}{\mathord{\it annulus}}
\newcommand{\sannulus}{\textsf{\sc SepAnn}}
\newcommand{\ssannulus}{\textsf{\sc SparseSepAnn}}
\newcommand{\CP}{\textsf{\sc ClosestPair}}

\newcommand{\algAll}[2]{\vspace{0.5em}
  \begin{minipage}{.90\linewidth}%
    \shadowbox{%
      \begin{minipage}{\linewidth}%
        \textbf{#1}%
      \end{minipage}}
    \par
    {
      \fontencoding{OT1}\fontfamily{ppl}\small#2}
    \par\vspace{-1em}
    \noindent\rule{\linewidth}{1mm} \linebreak
  \end{minipage}
  \vspace{0.5em}
}

\title{A Simple Randomized $O(n \log n)$--Time Closest-Pair Algorithm 
       in Doubling Metrics}
\author{Anil Maheshwari\thanks{School of Computer Science,
                               Carleton University, Ottawa, Canada.
                               Research supported by NSERC.}
\and 
Wolfgang Mulzer\thanks{Institut f\"ur Informatik,
                       Freie Universit\"at Berlin, Germany.
                       Research supported in part by ERC StG 757609.}
\and 
Michiel Smid\footnotemark[1] 
}
\date{\today} 

\begin{document}

\maketitle

\begin{abstract} 
Consider a metric space $(P,\dist)$ with $N$ points whose doubling 
dimension is a 
constant. We present a simple, randomized, and recursive algorithm that 
computes, in $O(N \log N)$ expected time, the closest-pair distance in 
$P$. To generate recursive calls, we use previous results of 
Har-Peled and Mendel, and Abam and Har-Peled for computing a sparse 
annulus that separates the points in a balanced way. 
\end{abstract} 

\begin{center}
\hfill\shadowbox{
  \begin{minipage}{.80\linewidth}
    {\textsf{For a long time researchers felt that there might be a
         quadratic lower bound on the complexity of the closest-pair
         problem.}  \\
      {\hspace*{\fill}\nolinebreak[1]\hspace*{\fill}} ---
      Jon Louis Bentley, \\
      {\hspace*{\fill}\nolinebreak[1]\hspace*{\fill}} ---
      \emph{Communications of the ACM}, volume 23, page 226, 1980}
  \end{minipage}
}
\end{center}

\section{Introduction}  \label{secintro}    
The closest-pair problem is one of the oldest problems in computational 
geometry: Given a set $P$ of $N$ points in the Euclidean space $\IR^d$,
where $d \geq 1$ is a constant, compute a \emph{closest-pair} in $P$, 
i.e., a pair $p,q$ of distinct points in $P$ for which the Euclidean 
distance $\dist(p,q)$ is minimum.  

\paragraph{The algorithm of Bentley and Shamos.}
The first $O(N \log N)$--time algorithm for this problem dates back to 
1976 and is due to Bentley and Shamos~\cite{bs-dcms-76} (See also 
Bentley~\cite{b-phd-76}). When $d=2$, the algorithm is particularly 
simple and an excellent example of a ``textbook algorithm'' that 
illustrates the power of the divide-and-conquer paradigm; see 
Cormen \emph{et al.}~\cite[Section~33.4]{clrs-ia-09} and 
Kleinberg and Tardos~\cite[Section~5.4]{kt-ad-06}. 
Bentley~\cite{b-cacm-80} mentions that this algorithm, for $d=2$, is 
due to Shamos, and the idea of using divide-and-conquer was suggested by 
H.R.\@ Strong.

\begin{figure}
\centering
\includegraphics{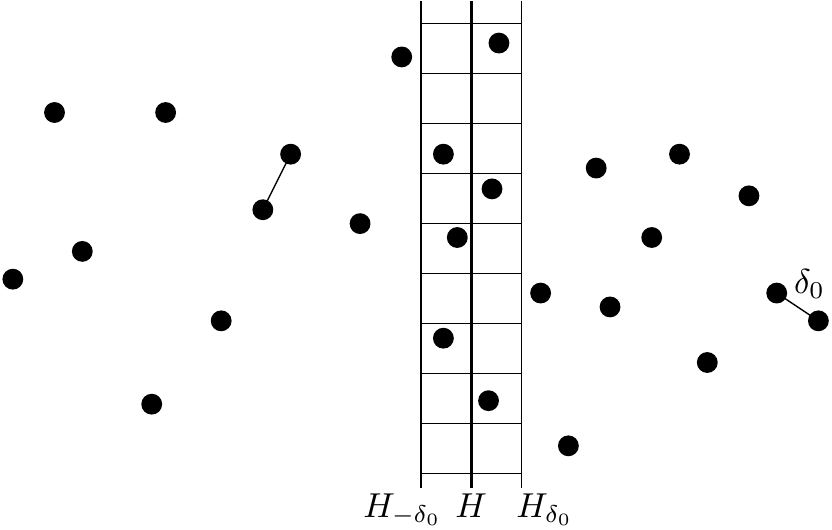}
\caption{The divide-and-conquer algorithm by Bentley and Shamos
in two dimensions: compute a hyperplane $H$ that partitions the
point set $P$ evenly; recurse on the ``left'' and on the ``right''
part; compute the closest pair in the slab between $H_{-\delta_0}$
and $H_{\delta_0}$, where $\delta_0$ is the minimum of the
two closest-pair distances to the left and to the right of $H$.}
\label{fig:bentley-shamos}
\end{figure}

We briefly describe the Bentley--Shamos algorithm. In a preprocessing 
step, for each $i=1,2,\ldots,d$, the algorithm sorts the points of $P$ 
according to their $i$-th coordinates. 

If $d=1$, the closest-pair in $P$ can easily be computed in $O(N)$ time,
by scanning the sorted sequence of elements of $P$. 

Assume that $d \geq 2$. We first introduce some notation. Let $H$ be a 
hyperplane that is orthogonal to one of the $d$ coordinate axes. For any 
real number $\delta>0$, we denote by $H_{-\delta}$ and $H_{+\delta}$ the 
two hyperplanes that are obtained by translating $H$ by a distance of 
$\delta$ to the ``left'' and ``right'', respectively.   

Bentley and Shamos prove that there exists a hyperplane $H$, such that,
for some positive constant $\alpha > 0$ that only depends on the dimension 
$d$, the following properties hold. First, at least $\alpha N$ points of 
$P$ are to the ``left'' of $H$ and at least $\alpha N$ points of $P$ are 
to the ``right'' of~$H$. Second, let $\delta_0$ be the smaller of the 
closest-pair distance to the left of $H$ and the closest-pair distance 
to the right of~$H$. Then, the \emph{slab} defined by $H_{-\delta_0}$ 
and $H_{+\delta_0}$ contains $O(N^{1-1/d})$ points of $P$. Third, 
such a hyperplane $H$ can be computed in $O(N)$ time. 
Observe that the exact value of $\delta_0$ is not known when $H$ is 
computed. However, during the computation of~$H$, we do obtain an 
upper bound on $\delta_0$. 

To compute the closest-pair distance in $P$, the algorithm recurses on 
two subproblems in~$\IR^d$, one subproblem for the points to the left of 
$H$ and one subproblem for the points to the right of $H$. Finally, the 
algorithm must consider the points inside the slab. Observe that these 
points are ``sparse'', in the sense that the number of points inside any 
hypercube with sides of length $\delta_0$ is bounded from above by a 
function that only depends on $d$, see Figure~\ref{fig:bentley-shamos}. 
Bentley and Shamos use the 
divide-and-conquer technique to solve the sparse problem, on only 
$O(N^{1-1/d})$ points, in $O(N)$ time. 

The total running time $T(N)$ of this algorithm satisfies the standard 
merge-sort recurrence 
\[ T(N) = O(N) + T(N') + T(N'') , 
\]
where $N' \leq (1-\alpha) N$, $N'' \leq (1-\alpha) N$, and $N' + N' = N$. 
It follows that the algorithm computes the closest-pair distance in $P$ 
in $O(N \log N)$ time. 

\paragraph{Our results.} 
The algorithm of Bentley and Shamos 
uses the fact that the points in the set $P$ have 
coordinates. This leads to the problem considered in this paper: Can we 
compute the closest-pair distance, by only using distances? Thus, we 
assume that $(P,\dist)$ is a metric space (to be defined in 
Section~\ref{secMS}), and we have an oracle that returns, in $O(1)$ 
time, the distance $\dist(p,q)$ for any two elements $p$ and $q$ of $P$. 

In general metric spaces, the closest-pair distance cannot be computed in 
subquadratic time: Assume that exactly one distance is equal to $1$, 
and all other distances are equal to~$2$. An easy adversary argument 
implies that any algorithm that computes the closest-pair distance 
must take $\Omega(N^2)$ time in the worst case.  

In this paper, we present a randomized algorithm that computes the 
closest-pair distance in $O(N \log N)$ expected time, for the case when 
the \emph{doubling dimension} of the metric space $(P,\dist)$ is bounded 
by a constant. Informally, this means that any ball can be covered by 
$O(1)$ balls of half the radius; the formal definition will be given in 
Section~\ref{secMS}.   

A closest-pair algorithm can be obtained from results by 
Har-Peled and Mendel~\cite{hm-fcnld-05}: They show that a well-separated 
pair decomposition of $P$ can be computed in $O(N \log N)$ expected time. 
Given this decomposition, the closest-pair distance can be obtained in 
$O(N)$ time. The drawback of this approach is that this algorithm is 
quite technical. We show that there is a very simple algorithm that 
computes the closest-pair distance in $O(N \log N)$ expected time.  
As we will see later, one of the main ingredients that we use is 
from~\cite{hm-fcnld-05}. 

Since the elements of $P$ do not have coordinates, there are no notions 
of a hyperplane or a slab. It is natural to replace these by a 
\emph{ball} and an \emph{annulus}; the latter is the subset of 
points between two concentric balls. 

\begin{figure}[t]
\centering
\includegraphics{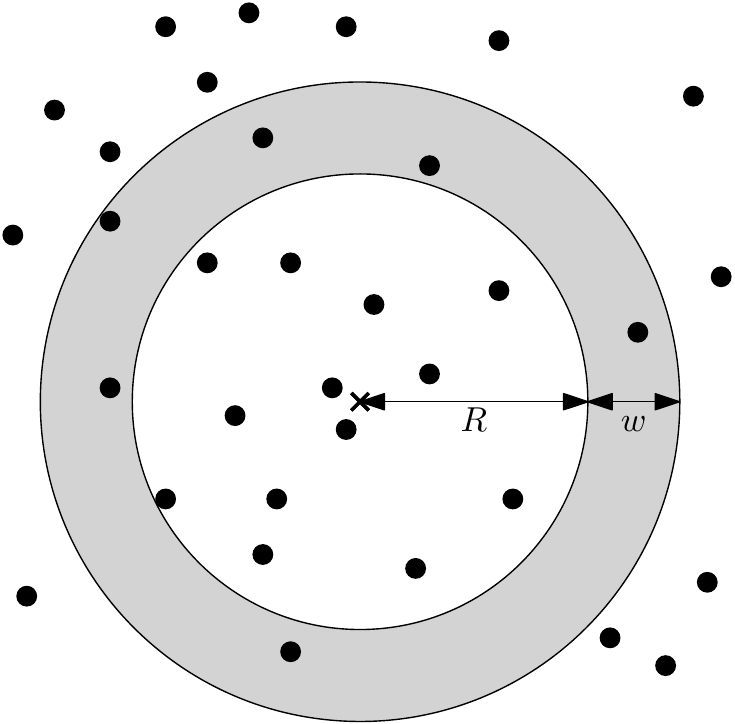}
\caption{A sparse separating annulus for a planar point
set $P$ with $N$ points: Each of the regions inside and outside the
annulus contains $\Omega(N)$ points; inside the annulus, there
are $O(\sqrt{N})$ points; and the width $w$ of the annulus is
proportional to $R / \sqrt{N}$.}
\label{fig:ssannulus}
\end{figure}

Let $d$ denote the doubling dimension of the metric space $(P,\dist)$. 
Abam and Har-Peled~\cite{ahp-ncsspd-12}, using a previous result of  
Har-Peled and Mendel~\cite{hm-fcnld-05}, show that, in $O(N)$ expected 
time, two concentric balls of radii $R$ and $R+w$ can be computed, such 
that, for some positive constant $\alpha > 0$ that only depends on $d$, 
\begin{enumerate} 
\item the ball of radius $R$ contains at least $\alpha N$ points, 
\item there are at least $\alpha N$ points outside the ball of radius 
      $R+w$, 
\item the annulus with radii $R$ and $R+w$ contains $O(N^{1-1/d})$ 
      points, and 
\item the width $w$ of this annulus is proportional to $R / N^{1/d}$. 
\end{enumerate}  
We will refer to this annulus as a \emph{sparse separating annulus},
see Figure~\ref{fig:ssannulus}. 
In Section~\ref{secSSA}, we will present a simplified version of the 
algorithm of Abam and Har-Peled~\cite{ahp-ncsspd-12} 
that computes such an annulus. 

Let $\delta$ be the closest-pair distance in $P$. A packing argument 
(see Section~\ref{secPL}) shows that the above ball of radius $R$ 
contains $O( (R/\delta)^d )$ points. Since this ball contains at least 
$\alpha N$ points, it follows that $R = \Omega(\delta \cdot N^{1/d})$. 
Thus, by choosing appropriate constants, the width $w$ of the above 
annulus is at least $\delta$. (The formal proofs will be presented in 
Section~\ref{secCP}.) Observe that, as in the Bentley--Shamos algorithm, 
the value of $\delta$ is not known when the two concentric balls are 
computed. 

Let $P_1$ be the subset of all points that are inside the ball of 
radius $R$, let $P_2$ be the subset of all points that are inside the 
annulus, and let $P_3$ be the subset of all points that are outside 
the ball of radius $R+w$. Then it suffices to recursively run the 
algorithm twice, once on $P_1 \cup P_2$, and once on $P_2 \cup P_3$. 
The expected running time of this algorithm satisfies the recurrence,   
\[ T(N) = O(N) + T(N') + T(N'') ,
\]
where $N' \leq (1-\alpha) N$, $N'' \leq (1-\alpha) N$, and 
$N' + N'' \leq N + O(N^{1-1/d})$. We will prove in Section~\ref{secsolve} 
that this recurrence solves to $T(N) = O(N \log N)$.

\section{Metric spaces and their doubling dimension}  \label{secMS}   
A \emph{metric space} is a pair $(P,\dist)$, where $P$ is a non-empty 
set and $\dist: P \times P \rightarrow \IR$ is a function such that for 
all $x$, $y$, and $z$ in $P$, 
\begin{enumerate} 
\item $\dist(x,x) = 0$,
\item $\dist(x,y) > 0$ if $x \neq y$, 
\item $\dist(x,y) = \dist(y,x)$, and 
\item $\dist(x,z) \leq \dist(x,y) + \dist(y,z)$. 
\end{enumerate} 
The fourth property is called the \emph{triangle inequality}. 
We refer to $\dist(x,y)$ as the \emph{distance} between $x$ and $y$. 
We only consider metric spaces in which the set $P$ is finite. We call 
the elements of $P$ \emph{points}. 

If $p \in P$ is a point, $S \subseteq P$ is a subset of $P$, 
and $R$, $R'$ are 
real numbers with $R' \geq R \geq 0$, then the \emph{ball} in $S$ with 
\emph{center} $p$ and \emph{radius} $R$ is the set 
\[ \ball_S(p,R) = \{ x \in S : \dist(p,x) \leq R \} ,
\]
and the \emph{annulus} in $S$ with \emph{center} $p$, 
\emph{inner radius} $R$, and \emph{outer radius} $R'$ is the set 
\[ \ann_S(p,R,R') = \{ x \in S : R < \dist(p,x) \leq R' \} .
\]
The \emph{closest-pair} distance in $S$ is 
\[ \delta(S) = 
    \left\{ 
      \begin{array}{ll} 
         \infty & \mbox{if $|S| \leq 1$,} \\ 
         \min \{ \dist(x,y) : x \in S, y \in S , x \neq y \} 
               & \mbox{if $|S| \geq 2$.}  
      \end{array} 
    \right. 
\] 

The doubling dimension of a metric space was introduced by 
Assouad~\cite{a-pldr-83}; see also Heinonen~\cite{h-lams-01}:

\begin{definition} 
\emph{
Let $(P,\dist)$ be a finite metric space and let $\lambda$ be the 
smallest integer such that the following is true: For every point $p$ in 
$P$ and every real number $R>0$, $\ball_P(p,R)$ can be covered by at most 
$\lambda$ balls in $P$ of radius $R/2$. The \emph{doubling dimension} 
of $(P,\dist)$ is defined to be $\log \lambda$.
}
\end{definition}

The doubling dimension is in the interval $[1, \log|P|]$ and, in 
general, is not an integer. For example, if $\dist$ is the Euclidean 
distance function in $\IR^2$, the doubling dimension is $\log 7$,
whereas in $\IR^d$, the doubling dimension is $\Theta(d)$. 
The discrete metric space $(P,\dist)$ in which the distance between any two 
distinct points is equal to $1$ has doubling dimension $\log |P|$.

\subsection{The doubling dimension of a subset} 
Our algorithm for computing the closest-pair distance in $P$ uses 
recursion. In a recursive call, the algorithm is run on a subset $S$ 
of $P$. We show below that the doubling dimension of $S$ may not be 
the same as that of $P$. 

Let $(P,\dist)$ be a metric space, let $d$ be its doubling dimension,
and let $S$ be a non-empty subset of $P$. To determine the doubling 
dimension of 
$(S,\dist)$,\footnote{With a slight abuse of notation, when writing 
$(S,\dist)$, we consider $\dist$ to be the restriction of the distance 
function to the set $S \times S$.}
we have to cover any ball $\ball_S(p,R)$, with $p \in S$ and $R>0$, by 
balls in $S$ of radius $R/2$ that are centered at points of~$S$. The 
number of such balls may be larger than $2^d$. 

\begin{figure}
  \centering
  \includegraphics{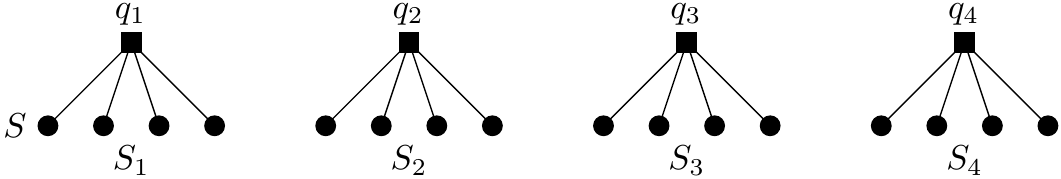}
  \caption{A metric space $P$ of 20 points and a subset
  $S = \bigcup_{i = 1}^4 S_i$ of $P$ with strictly
  smaller doubling dimension. For $i = 1, \dots, 4$, the distance
  between $q_i$ and all points in $S_i$ is $1$. All other 
  distances between pairs of distinct points are $2$.}
  \label{fig:ddimsubset}
\end{figure}

To give an example, let $n$ be a positive integer and let $(S,\dist)$ 
be the metric space of size $n^2$ with $\dist(x, y)=2$ for all distinct 
points $x$ and $y$ in $S$. The doubling dimension $d_S$ of $(S,\dist)$ 
is equal to 
\[ d_S = \log |S| = 2 \log n .
\] 
Partition $S$ into subsets $S_1,S_2,\ldots,S_n$, each consisting of $n$ 
points. Let $q_1,q_2,\ldots,q_n$ be new points, and let 
\[ P = S \cup \{ q_1,q_2,\ldots,q_n \} .
\] 
(For an illustration with $n=4$, refer to Figure~\ref{fig:ddimsubset}.) 
For any two points $x$ and $y$ in $P$, define 
\[ \dist(x,y) = 
    \left\{ 
     \begin{array}{ll} 
       0 & \mbox{if $x=y$,} \\ 
       1 & \mbox{if there is an $i$ such that $x=q_i$ and $y \in S_i$, 
                 or $x \in S_i$ and $y=q_i$,} \\
       2 & \mbox{otherwise.} \\
     \end{array} 
    \right. 
\] 
Since all distances between distinct points are $1$ or $2$, it follows
that $(P,\dist)$ fulfills the triangle inequality. Hence,
$(P, \dist)$ is a metric space. We will prove 
below that the doubling dimension $d$ of this metric space is equal to 
\[ d = \log (n+1) . 
\]
Thus, for large values of $n$, the ratio $d_S/d$ converges to $2$. 

To determine the doubling dimension of $(P,\dist)$, let $p$ be a point 
of $P$, let $R > 0$ be a real number, and let $B = \ball_P(p,R)$. 
If $R \in (0, 1)$, then $B$ is a singleton set, which is covered by the ball 
$\ball_P(p,R/2)$. If $R \in [2, \infty)$, then $B = P$, which 
is covered by the 
$n$ balls in $P$ of radius $R/2$ that are centered at 
$q_1,q_2,\ldots,q_n$. If $R \in [1, 2)$, then 
$B = \{ q_i \} \cup S_i$ for some $i$. In this case, $B$ can only be 
covered by the $n+1$ balls in $P$ of radius $R/2$ that are centered at the 
points of $B$. Thus, for each case, we have shown that $B$ can be covered 
by at most $n+1$ balls in $P$ of radius $R/2$, and for some $B$, we
need $n+1$ 
such balls. This proves that $d = \log(n+1)$. 

The following lemma states that the doubling dimension of a subset $S$ 
of $P$ is always at most twice the doubling dimension of $P$.  

\begin{figure}
\centering
\includegraphics{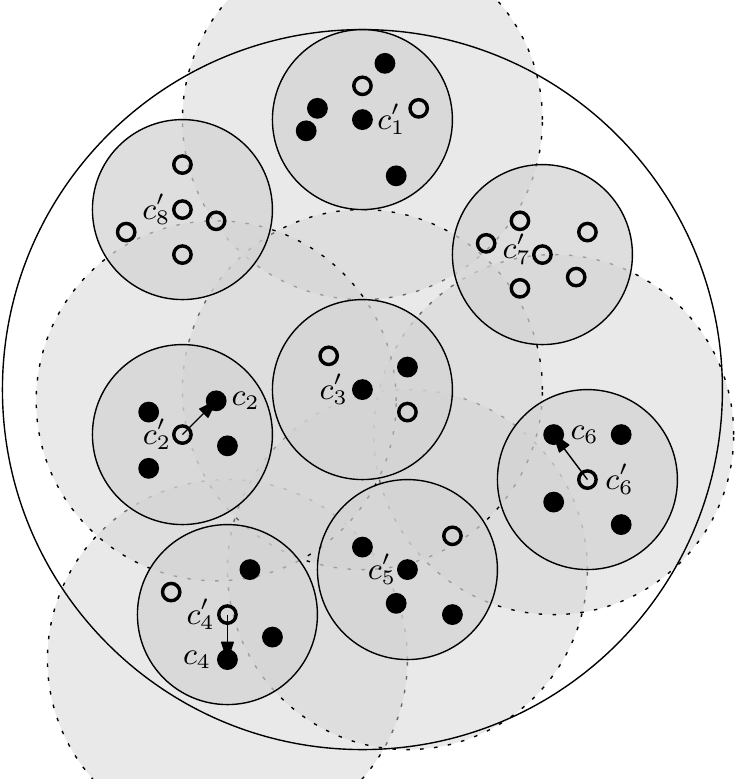}
\caption{Illustration of the proof of Lemma~\ref{lem:ddimsubset}.
The points of $S$ are solid; the points of $P \setminus S$
are empty.  The ball $B'$ can be covered in $P$ by $8$ balls $B'_1,
\ldots, B_8'$ with centers $c_1', \ldots, c_8'$. For $B'_1, \ldots, B'_6$, 
the intersection with $S$ is nonempty. The centers $c_1'$, $c_3'$, and $c_5'$ 
are also in $S$, the centers $c_2'$, $c_5'$, and $c_6'$ must be moved.
This increases the covering radius to 
$R/2$.}
\label{fig:ddimsubset2}
\end{figure}

\begin{lemma} 
\label{lem:ddimsubset}
Let $(P,\dist)$ be a metric space, let $d$ be its doubling dimension,
and let $S$ be a non-empty subset of $P$. Then the metric space 
$(S,\dist)$ has doubling dimension at most $2d$. 
\end{lemma}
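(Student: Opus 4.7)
The plan is to reduce covering in $S$ to covering in $P$, paying for the fact that centers must live in $S$ by applying the doubling property of $P$ one extra time. Write $\lambda = 2^d$, so by definition every ball in $P$ of radius $r$ is covered by $\lambda$ balls in $P$ of radius $r/2$; iterating this once more, every ball in $P$ of radius $R$ is covered by $\lambda^2 = 2^{2d}$ balls in $P$ of radius $R/4$. The target is a bound of $2^{2d}$ balls in $S$ of radius $R/2$, which is exactly what falls out of this double application once we move the centers to $S$.

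Concretely, fix $p \in S$ and $R > 0$, and consider $\ball_S(p,R) \subseteq \ball_P(p,R)$. First I would cover $\ball_P(p,R)$ by balls $B_1, \dots, B_k$ in $P$ of radius $R/4$ with $k \leq \lambda^2$; call their centers $c_1, \dots, c_k \in P$. Then I would discard every $B_i$ that is disjoint from $S$, and for each surviving $B_i$ I would pick an arbitrary witness $s_i \in B_i \cap S$ to serve as the new center. This is exactly the center-shifting depicted in the lemma's figure.

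The key geometric step is the triangle inequality: for any $x \in B_i$,
\[
  \dist(x, s_i) \leq \dist(x, c_i) + \dist(c_i, s_i) \leq R/4 + R/4 = R/2,
\]
so $B_i \subseteq \ball_P(s_i, R/2)$, and therefore $B_i \cap S \subseteq \ball_S(s_i, R/2)$. Taking the union over surviving indices covers $\ball_S(p, R)$ by at most $\lambda^2 = 2^{2d}$ balls of $(S,\dist)$ of radius $R/2$, which yields $\log \lambda_S \leq 2d$ as required.

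There is no real obstacle here: the only subtlety is remembering that the definition of doubling dimension for $(S,\dist)$ requires the covering balls to be centered at points of $S$, which is precisely why a single application of the doubling property of $P$ (giving radius $R/2$ balls centered in $P$) is not enough and the extra factor of $2$ in the exponent is incurred. The example with $n^2 + n$ points preceding the lemma shows that a blowup of this order is genuinely possible, so the bound $2d$ cannot be substantially improved by a proof of this flavor.
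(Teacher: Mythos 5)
Your proof is correct and follows essentially the same route as the paper: apply the doubling property of $(P,\dist)$ twice to get radius-$R/4$ balls, discard those missing $S$, shift each remaining center to a point of $S$ inside the ball, and use the triangle inequality to conclude that radius $R/2$ suffices, giving at most $2^{2d}$ covering balls in $S$. No gaps.
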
 
\begin{proof} 
Let $p$ be a point in $S$, let $R>0$ be a real number, and consider the 
ball $B = \ball_S(p,R)$ in $S$. Let $B' = \ball_P(p,R)$ be the 
corresponding ball in $P$. By applying the definition of doubling 
dimension twice, we can cover $B'$ by balls $B'_i$, 
for $1 \leq i \leq 2^{2d}$, in $P$, each having radius $R/4$.  
Let $k$ be the number of indices $i$ for which 
$B'_i \cap S \neq \emptyset$. We may assume, without loss of generality, 
that $B'_i \cap S \neq \emptyset$ for all $i$ with $1 \leq i \leq k$, 
and $B'_i \cap S = \emptyset$ for all $i$ with $k+1 \leq i \leq 2^{2d}$. 
For $i=1,2,\ldots,k$, let $c'_i \in P$ be the center of $B'_i$, let 
\[ c_i = 
    \left\{ 
      \begin{array}{ll} 
        c'_i & \mbox{if $c'_i \in S$,} \\ 
        \mbox{an arbitrary point in $B'_i \cap S$} & 
               \mbox{if $c'_i \not\in S$,} 
      \end{array}
    \right. 
\] 
and let $B_i = \ball_S(c_i,R/2)$, see Figure~\ref{fig:ddimsubset2}.

We claim that the balls $B_i$ in $S$, $1 \leq i \leq k$, cover the ball 
$B$. To prove this, let $q$ be a point in $B$. Then, $q \in B'$ and, 
thus, there is an index $i$, $1 \leq i \leq k$, with 
$q \in B'_i$. Since 
\[ \dist(c_i,q) \leq \dist(c_i,c'_i) + \dist(c'_i,q) \leq R/4 + R/4  
     = R/2 ,
\]
the point $q$ is in the ball $B_i$.
We have shown that any ball in $S$ of radius $R$ can be covered by at 
most $2^{2d}$ balls in $S$ of radius $R/2$. 
\end{proof} 

\subsection{The packing lemma}  \label{secPL} 
Consider a metric space $(P,\dist)$ whose doubling dimension is 
``small'', and a ball $B$ in $P$ whose radius $R$ is proportional to 
the closest-pair distance $\delta(P)$. By repeatedly applying the 
definition of doubling dimension, we can cover $B$ by a ``small'' number 
of balls of radius less than $\delta(P)$. Since each of these smaller
balls contains only one point, the original ball $B$ cannot contain ``many'' 
points. The following lemma formalizes this. 

\begin{lemma}  \label{lemball}  
Let $(P,\dist)$ be a finite metric space with $|P| \geq 2$ and 
doubling dimension $d$.
Let $\delta$ be the closest-pair distance 
in $P$. Then, for any point $p$ in $P$ and any real number 
$R \geq \delta/2$, 
\[ 
  |\ball_P(p,R)| \leq (4R/\delta)^d . 
\] 
\end{lemma}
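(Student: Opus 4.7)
My plan is to iterate the doubling-dimension property of $(P,\dist)$, repeatedly halving the covering radius, until the covering balls are so small that each contains at most one point of $P$. The key observation is that any two points in a ball of radius $r$ are, by the triangle inequality, at distance at most $2r$ from each other; so once the covering radius drops below $\delta/2$, each covering ball holds at most one point of $P$.

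Concretely, I would let $k$ be the smallest positive integer such that $R/2^k < \delta/2$, equivalently $2^k > 2R/\delta$. Starting from $\ball_P(p,R)$, one application of the doubling-dimension definition gives a cover by at most $2^d$ balls of radius $R/2$ in $P$; applying the definition to each of these in turn yields a cover by at most $2^{2d}$ balls of radius $R/4$, and so on. After $k$ iterations we obtain a cover of $\ball_P(p,R)$ by at most $2^{dk}$ balls in $P$, each of radius $R/2^k < \delta/2$. Each such ball contains at most one point of $P$, so $|\ball_P(p,R)| \leq 2^{dk}$.

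It then remains to bound $2^{dk}$ by $(4R/\delta)^d$. The minimality of $k$ gives $2^{k-1} \leq 2R/\delta$, and this inequality is meaningful because the hypothesis $R \geq \delta/2$ forces $2R/\delta \geq 1$ and hence $k \geq 1$. Multiplying by $2$ yields $2^k \leq 4R/\delta$, and raising to the $d$-th power gives $2^{dk} \leq (4R/\delta)^d$, which is the desired bound. I do not expect any real obstacle here; the only care needed is in the choice of $k$ and in checking the boundary case $R=\delta/2$, so that the strict inequality $R/2^k<\delta/2$ is genuinely used to keep each small ball down to at most one point.
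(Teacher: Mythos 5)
Your proof is correct and follows essentially the same approach as the paper: iterate the doubling property roughly $\log(2R/\delta)$ times so that the small covering balls each contain at most one point of $P$, then bound the number of balls by $2^{kd} \leq (4R/\delta)^d$ using the (near-)minimality of $k$ together with the hypothesis $R \geq \delta/2$. The only, harmless, difference is that you halve down to radius below $\delta/2$ and invoke the triangle inequality, while the paper stops at radius below $\delta$ and uses that the covering balls are centered at points of $P$; the constant $4$ in the bound absorbs either choice.
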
 

\begin{proof}
Set $k = \lceil \log (2R/\delta) \rceil$. Then, $k \geq 0$ and 
$2R/\delta \leq 2^k < 4R/\delta$. We apply the definition of 
doubling dimension $k$ times in order to cover $\ball_P(p,R)$ by 
$2^{kd} \leq (4R/\delta)^d$ balls of radius $R/2^k < \delta$. 
Each of these $2^{kd}$ balls contains exactly one point of $P$, namely 
its center. 
\end{proof} 

\section{Computing a sparse separating annulus} 
\label{secSSA} 

Throughout this section, $(P,\dist)$ is a finite metric space, $d$ 
denotes its doubling dimension, $S$ is a non-empty subset of $P$, and 
$n$ denotes the size of $S$. Observe that $d$ will always refer to 
the doubling dimension of the entire metric space $(P,\dist)$. 

In this section, we present a simplified variant of the algorithm of 
Abam and Har-Peled~\cite{ahp-ncsspd-12} to compute the sparse 
separating annulus that was mentioned in Section~\ref{secintro}. 

\subsection{Computing a separating annulus} 

Let $\mu \geq 1$ be a real number (possibly depending on $n$) and set 
$c= 2 (8\mu)^d$. Assume that $n \geq c+1$. As a first step, we give a 
randomized algorithm that computes a point $p$ in $S$ and a real number 
$R'>0$, such that $|\ball_S(p,R')| \geq n/c$ and 
$|\ball_S(p,\mu R')| \leq n/2$. This algorithm is due to 
Har-Peled and Mendel~\cite[Lemma~2.4]{hm-fcnld-05}; see also 
Abam and Har-Peled~\cite[Lemma~2.6]{ahp-ncsspd-12}. 
In order to be self-contained, we present the algorithm and its analysis. 

The algorithm chooses a uniformly random point $p$ in $S$ and 
computes the smallest radius $R_p$ such that $\ball_S(p,R_p)$ contains 
at least $n/c$ points. Then it checks if $\ball_S(p,\mu R_p)$ contains 
at most $n/2$ points. If this is the case, the algorithm returns $p$ 
and $R_p$. Otherwise, the algorithm is repeated. The pseudocode for this 
algorithm is given below. 
  
\algAll{Algorithm $\sannulus(S,n,d,\mu,c)$}{
{\bf Comment:} The input is a subset $S$, of size $n$, of a metric space 
of doubling dimension $d$, and real numbers $\mu \geq 1$ and $c>1$. 
If $c = 2 (8\mu)^d$ and $n \geq c+1$, then the algorithm returns a point 
$p$ in $S$ and a real number $R'>0$ that satisfy the two properties in 
Lemma~\ref{lemann}. 

\begin{quote}
\begin{tabbing}
{\bf repeat} \= $p =$ uniformly random point in $S$; \\
\> $R_p = \min \{ r>0 : | \ball_S(p,r) | \geq n/c \}$ \\ 
{\bf until} $| \ball_S(p, \mu R_p) | \leq n/2$; \\
$R' = R_p$; \\ 
return $p$ and $R'$  
\end{tabbing}
\end{quote}
}

\begin{lemma}  \label{lemann}  
Let $\mu \geq 1$ be a real number (possibly depending on $n$) and set 
$c = 2 (8\mu)^d$. Assume that $n \geq c+1$. Algorithm 
$\textup{\sannulus}(S,n,d,\mu,c)$ 
has expected running time $O(cn)$. It returns a 
point $p$ in $S$ and a real number $R'>0$, such that 
\begin{enumerate} 
\item $| \ball_S(p,R') | \geq n/c$ and 
\item $| \ball_S(p,\mu R') | \leq n/2$. 
\end{enumerate} 
\end{lemma}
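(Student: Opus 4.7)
The plan is to split the proof into correctness and the expected running-time bound. Correctness is immediate from the code: property 2 is exactly the stopping condition of the \texttt{repeat} loop, and property 1 holds because $R'=R_p$ is defined as the smallest radius with $|\ball_S(p,r)|\geq n/c$. For the running time, I would first observe that one iteration runs in $O(n)$ time: query the $n-1$ distances from $p$ through the oracle, use linear-time selection to obtain $R_p$ as the $\lceil n/c\rceil$-th smallest distance, and make one more linear scan to count the points within distance $\mu R_p$ of $p$. Hence it suffices to show that the expected number of iterations is $O(c)$; equivalently, that a uniformly random $p\in S$ satisfies the stopping condition with probability at least $1/c$.

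Call $p\in S$ \emph{good} if $|\ball_S(p,\mu R_p)|\leq n/2$. The core claim is that at least $n/c$ points of $S$ are good, which by independence of the iterations yields the desired $O(c)$ expected iteration count via a standard geometric-random-variable argument. To identify good points I would work with the set $T_p=\{q\in S: R_q\leq R_p/2\}$ and first prove the sufficient condition: if $|T_p|\leq n/c$, then $p$ is good. Because $\lceil\log(4\mu)\rceil\leq\log(8\mu)$, iterating the definition of doubling dimension covers $\ball_P(p,\mu R_p)$ by at most $(8\mu)^d=c/2$ balls in $P$, each of radius at most $R_p/4$. For any such cover ball $B$ with $B\cap S\neq\emptyset$, either $(B\cap S)\setminus T_p$ contains some $q$, in which case $R_q>R_p/2$ and the triangle inequality yields $B\cap S\subseteq \ball_S(q,R_p/2)$, so $|B\cap S|<n/c$ by the minimality of $R_q$; or $B\cap S\subseteq T_p$, so $|B\cap S|\leq|T_p|\leq n/c$. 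Summing over the $\leq c/2$ cover balls, $|\ball_S(p,\mu R_p)|\leq (c/2)(n/c)=n/2$, so $p$ is good.

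It remains to count points satisfying this sufficient condition. Sort $S$ in nondecreasing order of $R_p$ as $p_1,\ldots,p_n$. Since $R_{p_i}/2<R_{p_i}$, every $q$ with $R_q\leq R_{p_i}/2$ must appear strictly earlier in this order, so $|T_{p_i}|\leq i-1$. For all $i\leq \lfloor n/c\rfloor+1$ this gives $|T_{p_i}|\leq n/c$, and by the previous paragraph each such $p_i$ is good, so at least $n/c$ points of $S$ are good. The main obstacle is the doubling-dimension calibration in the second paragraph: the cover radius $R_p/4$ is chosen just small enough that the triangle inequality produces $R_p/2$-balls, matching the scale at which the minimality of $R_q$ forces $|\ball_S(q,R_p/2)|<n/c$, while the cover size stays at $(8\mu)^d=c/2$; the final sum $(c/2)(n/c)=n/2$ then lands exactly at the goodness threshold, explaining the specific choice $c=2(8\mu)^d$.
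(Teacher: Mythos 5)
Your proposal is correct, but for the heart of the lemma --- showing that a uniformly random $p \in S$ is good with probability at least $1/c$ --- you take a genuinely different route from the paper. The paper's proof fixes a single global witness: a minimum-radius ball $\ball_P(q,R)$, centered at a point of $P$, containing at least $n/c$ points of $S$. For any $p \in \ball_S(q,R)$ one gets $R_p \leq 2R$, and the minimality of $R$ immediately forces every ball of radius less than $R$ in a cover of $\ball_P(p, 2\mu R)$ to contain fewer than $n/c$ points of $S$; hence all $\geq n/c$ points of $\ball_S(q,R)$ are good. You instead argue pointwise: you cover $\ball_P(p,\mu R_p)$ by $c/2$ balls of radius $R_p/4$, and for each cover ball you either re-center it at a point $q \in S$ inside it with $R_q > R_p/2$ (the radius doubling to $R_p/2$ is exactly why you need cover radius $R_p/4$, the same re-centering trick the paper uses in Lemma~\ref{lem:ddimsubset}), so that minimality of $R_q$ bounds it by fewer than $n/c$ points, or the ball lies entirely in $T_p$ and is bounded by $|T_p| \leq n/c$; sorting by $R_p$ then exhibits $\lfloor n/c \rfloor + 1$ explicit good points. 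Both arguments are sound and give success probability at least $1/c$ per iteration, hence $O(c)$ expected iterations and $O(cn)$ expected time. The paper's version is shorter because the single minimum ball over centers in $P$ handles all cover balls at once; yours avoids introducing that auxiliary global witness and identifies concretely which points are good (those with the smallest values of $R_q$), at the cost of the two-case analysis. One small omission: you should note, as the paper does, that $n \geq c+1$ forces $\ball_S(p,R_p)$ to contain at least two points, so $R_p > 0$; you need this both for the statement ($R' > 0$) and implicitly in your sorting step, where you use $R_{p_i}/2 < R_{p_i}$.
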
 
\begin{proof}
Let $p$ be a point in $S$. Since $n \geq c+1$, we have 
$|\ball_S(p,R_p)| \geq n/c > 1$. Therefore, $\ball_S(p,R_p)$ contains 
at least two points of $S$, which means that $R_p > 0$. The radius $R_p$ 
can be found in $O(n)$ time, by selecting the $\lceil n/c \rceil$-th 
smallest element in the sequence of distances between $p$ and all 
points of $S$ (including $p$ itself). By scanning this sequence, we can 
compute $|\ball_S(p,\mu R_p)|$ in $O(n)$ time. Thus, one iteration of 
the algorithm takes $O(n)$ time. 

We say that a point $p$ in $S$ is \emph{good}, if 
$|\ball_S(p,\mu R_p)| \leq n/2$. We will prove below that a uniformly 
random point of $S$ is good with probability at least $1/c$. This will 
imply that the expected number of iterations of the algorithm is at most 
$c$ and, therefore, the expected running time is $O(cn)$. 

Consider a ball in $P$ of minimum radius that contains at least $n/c$ 
points of $S$ and that is centered at a point of $P$. 
Let $q \in P$ be the center of this ball and let $R$ be its radius. 
We claim that every point in $\ball_S(q,R)$ is good. This will imply 
that a uniformly random point in $S$ has probability at least $1/c$ of 
being good. See Figure~\ref{fig:ann} for an illustration of the argument.

\begin{figure}
\centering
\includegraphics{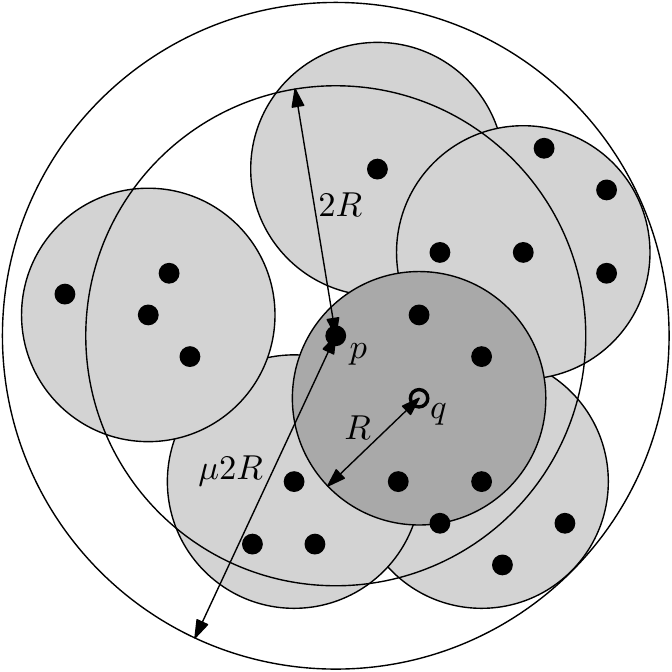}
\caption{Illustration of the proof of Lemma~\ref{lemann}. The point
$q \in P$ and the radius $R$ are such that $\ball_P(q, R)$ is the
minimum-radius ball in $P$ that contains at least $n/c$ points from $S$.
If we pick an arbitrary point $p \in \ball_S(q, R)$, then
$\ball_S(p, 2R)$ covers $\ball_S(q, R)$ and hence contains at least
$n/c$ points. The ball $\ball_P(p, \mu 2R)$ can be covered by $c/2$
balls in $P$ of radius $R$, and hence it contains at most $n/2$ points from
$S$.}
\label{fig:ann}
\end{figure}

To prove the claim, let $p$ be a point in $\ball_S(q,R)$. We will show 
that $|\ball_S(p,\mu R_p)| \leq n/2$. We first observe that 
\[ \ball_S(q,R) \subseteq \ball_S(p,2R) .
\] 
Indeed, if $x \in \ball_S(q,R)$, then 
\[ \dist(p,x) \leq \dist(p,q) + \dist(q,x) \leq R + R = 2 R  
\] 
and, therefore, $x \in \ball_S(p,2R)$. 
It follows that 
\[ |\ball_S(p,2 R)| \geq |\ball_S(q,R)| \geq n/c ,
\]
which implies that 
\[ R_p \leq 2 R . 
\] 

Let $k = \lceil \log (4\mu) \rceil$. By the definition of doubling 
dimension, we can cover $\ball_P(p, \mu 2R)$ by 
$2^{kd} < 2^{(\log(4\mu) + 1)d} = (8 \mu)^d = c/2$ balls in $P$ of 
radius $ \mu 2R/2^k < R$. By the definition of~$R$, each of these 
(at most) $c/2$ balls contains less than $n/c$ points of $S$. Therefore,
\[ 
  |\ball_S(p,\mu R_p)| \leq |\ball_S(p, \mu 2R)| < 
     c/2 \cdot n/c = n/2 . 
\]
Thus, we have shown that every point in $\ball_S(q,R)$ is good. 
\end{proof} 

\begin{remark} \label{rem1} 
\emph{
Consider the parameters $\mu$, $c$, and $k$ in Lemma~\ref{lemann} and its 
proof. If $\log (4\mu)$ is not an integer, then we can take 
$k = \lceil \log (2 \mu) \rceil$ and reduce the value of $c$ to 
$2 (4\mu)^d$.  
}
\end{remark}

\subsection{A refinement of the algorithm} 
Algorithm $\sannulus(S,n,d,\mu,c)$ returns a point $p$ and a real 
number $R'>0$, such that $| \ball_S(p,R') | \geq n/c$ and 
$| \ball_S(p,\mu R') | \leq n/2$. The annulus $\ann_S(p,R',\mu R')$ may 
contain $\Theta(n)$ points. In this section, we present a refinement of 
this algorithm that outputs an annulus that contains a ``small'' number 
of points of $S$. Our algorithm is a simplified version of an algorithm 
due to Abam and Har-Peled~\cite[Lemma~2.7]{ahp-ncsspd-12}. 

The refined algorithm takes as input an integer $t \geq 1$ that may 
depend on $n$. First it runs algorithm $\sannulus(S,n,d,\mu,c)$ with 
$\mu = e$ and (by Remark~\ref{rem1}) $c = 2 (4e)^d$. 
Consider the 
output $p$ and $R'$. Recall that (since $n \geq c + 1$) we have $R' > 0$. Let 
\[ R_i = (1+1/t)^i \cdot R' 
\]
for $i = 0,1,\ldots,t$, and 
\[ A_i = \ann_S(p,R_{i-1},R_i) 
\] 
for $i = 1,2,\ldots,t$. The inequality $1+x \leq e^x$, which is valid 
for all real numbers $x$, implies that, for each $i$ with 
$0 \leq i \leq t$, 
\[ R_i \leq \left( e^{1/t} \right)^i \cdot R' = 
         e^{i/t} \cdot R' \leq e R' . 
\] 

\begin{figure}
\centering
\includegraphics{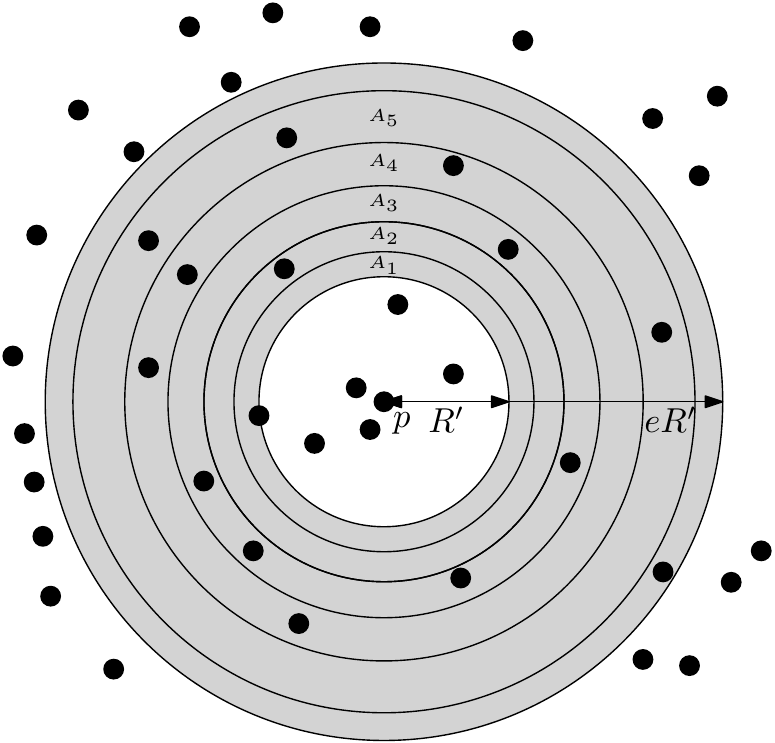}
\caption{The annulus $\ann_S(p, R', eR')$ contains $t = 5$ annuli
$A_1, A_2, \ldots, A_5$. At least one of them contains at most 
$n/10$ points, and at least $3$ of them contain at most $n/5$
points.}
\label{fig:ssannulus2}
\end{figure}

\noindent 
Thus, the $t$ annuli $A_i$ are contained in $\ann_S(p,R',eR')$, 
see Figure~\ref{fig:ssannulus2}.
Observe 
that they are pairwise disjoint and, together, contain at most $n/2$ 
points of $S$. Therefore, there is an $i$ such that $|A_i| \leq n/(2t)$. 
We can compute $|A_1|, |A_2|, \ldots, |A_t|$ and, thus, the smallest 
of these values, as follows: Any point $x$ in $S$ with 
$R' = R_0 < \dist(p,x) \leq R_t$ is contained in $A_j$, where 
\[ j = \left\lceil \frac{\log(\dist(p,x)/R')}{\log(1+1/t)} \right\rceil . 
\] 
Thus, by scanning the sequence of distances between $p$ and all points 
of $S$, we can compute, in $O(n)$ time, an index $i$ such that 
$|A_i| \leq n/(2t)$. This is the approach of 
Abam and Har-Peled~\cite{ahp-ncsspd-12}. 

Our simplification uses the fact that, on average, one annulus $A_i$  
contains at most $n/(2t)$ points of $S$ and, thus, by Markov's
inequality, at least $t/2$ of 
these annuli contain at most $n/t$ points of $S$. The algorithm finds 
such an annulus $A_i$ by repeatedly choosing a uniformly random 
element $i$ from $\{1,2,\ldots,t\}$. As soon as $|A_i| \leq n/t$, the 
algorithm returns $p$ and $R_{i-1}$. The pseudocode for this algorithm 
is given below.  

\algAll{Algorithm $\ssannulus(S,n,d,t)$}{
{\bf Comment:} The input is a subset $S$, of size $n \geq 2 (4e)^d +1$, 
of a metric space of doubling dimension $d$, and an integer $t \geq 1$. 
The algorithm returns a point $p$ in $S$ and a real number $R>0$ that 
satisfy the three properties in Lemma~\ref{lemrann}. 

\begin{quote}
\begin{tabbing}
$c = 2 (4e)^d$; \\ 
let $p \in S$ and $R'>0$ be the output of algorithm 
$\sannulus(S,n,d,e,c)$; \\
{\bf repeat} \= $i=$ uniformly random element in $\{1,2,\ldots,t\}$; \\ 
         \> $s = |A_i|$ \\
{\bf until} $s \leq n/t$; \\ 
$R = R_{i-1}$; \\
return $p$ and $R$ 
\end{tabbing}
\end{quote}
}

\begin{lemma}  \label{lemrann}  
Let $t \geq 1$ be an integer (possibly depending on $n$) and let 
$c = 2 (4e)^d$. Assume that $n \geq c+1$. Algorithm 
$\textup{\ssannulus}(S,n,d,t)$ has expected running time $O(cn)$. It returns a 
point $p$ in $S$ and a real number $R>0$, such that 
\begin{enumerate} 
\item $| \ball_S(p,R) | \geq n/c$, 
\item $| \ann_S(p,R,(1+1/t)R) | \leq n/t$, and  
\item $| S \setminus \ball_S(p,(1+1/t)R) | \geq n/2$. 
\end{enumerate} 
\end{lemma}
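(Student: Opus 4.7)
The plan is to treat the two phases of the algorithm separately. For the initial call to $\sannulus(S,n,d,e,c)$ with $c = 2(4e)^d$, Lemma~\ref{lemann} directly gives expected running time $O(cn)$ and produces $p \in S$ and $R' > 0$ with $|\ball_S(p, R')| \geq n/c$ and $|\ball_S(p, eR')| \leq n/2$. For the \textbf{repeat} loop, each iteration needs $O(n)$ time to compute $|A_i|$, since we just count those $x \in S$ satisfying $R_{i-1} < \dist(p,x) \leq R_i$. Thus it suffices to show that the expected number of iterations is $O(1)$, after which the total time bound $O(cn)$ follows immediately.

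The central observation driving the probabilistic analysis is that the $t$ annuli $A_1, \ldots, A_t$ are pairwise disjoint and together lie inside $\ball_S(p, R_t) \setminus \ball_S(p, R')$. Since $R_t = (1+1/t)^t R' \leq eR'$, as already noted in the text preceding the algorithm, we get $\sum_{i=1}^t |A_i| \leq |\ball_S(p, eR')| \leq n/2$, so the average of $|A_i|$ over $i \in \{1, \ldots, t\}$ is at most $n/(2t)$. Markov's inequality then implies that at least half of the indices $i$ satisfy $|A_i| \leq n/t$. Hence each iteration of the loop succeeds with probability at least $1/2$, and the expected number of iterations is at most $2$.

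It remains to verify the three output properties for $R = R_{i-1}$, where $i$ is the index chosen in the successful iteration. Property~1 holds because $R \geq R_0 = R'$, so $\ball_S(p, R) \supseteq \ball_S(p, R')$ and hence contains at least $n/c$ points. Property~2 is immediate from the termination condition, since $\ann_S(p, R, (1+1/t)R) = \ann_S(p, R_{i-1}, R_i) = A_i$, and the loop exits only when $|A_i| \leq n/t$. Property~3 follows from $(1+1/t) R = R_i \leq R_t \leq eR'$, which yields $\ball_S(p, (1+1/t)R) \subseteq \ball_S(p, eR')$ and therefore $|S \setminus \ball_S(p, (1+1/t)R)| \geq n - n/2 = n/2$. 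The only delicate step is the Markov averaging argument; once one recognizes that the annuli $A_i$ partition a subset of $\ball_S(p, eR')$, everything else is bookkeeping with the geometric progression $R_i$ and the bound $(1+1/t)^t \leq e$.
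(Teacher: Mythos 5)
Your proof is correct and follows essentially the same route as the paper's own argument: Lemma~\ref{lemann} for the first phase, the observation that the disjoint annuli $A_1,\ldots,A_t$ lie in $\ball_S(p,eR')$ and hence hold at most $n/2$ points, the Markov-type count giving success probability at least $1/2$ per iteration, and the same three containment arguments for the output properties. Nothing is missing.
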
 
\begin{proof}
Consider the output $p$ and $R'$ of algorithm $\sannulus(S,n,d,e,c)$. 
We have seen above that the annuli $A_1,A_2,\ldots,A_t$ are contained 
in $\ball_S(p,eR')$ and, thus, together, contain at most $n/2$ points 
of $S$. Moreover, at least $t/2$ of these annuli contain at most $n/t$ 
points of $S$. Therefore, in one iteration of the repeat-until-loop in algorithm 
$\ssannulus(S,n,d,t)$, the size of $A_i$ is at most $n/t$ with 
probability at least $1/2$. It follows that the expected number of
iterations of this repeat-until-loop is at most two. Since one iteration takes 
$O(n)$ time (by scanning the sequence of distances between $p$ and all
points of $S$), the entire repeat-until-loop takes expected time $O(n)$. This, 
together with Lemma~\ref{lemann}, implies that the expected running time 
of algorithm $\ssannulus(S,n,d,t)$ is $O(cn)$.  

Consider the output $p$ and $R = R_{i-1}$. We have 
\[ |\ball_S(p,R)| \geq |\ball_S(p,R')| \geq n/c 
\]
and 
\[ |\ann_S(p,R,(1+1/t)R)| = |A_i| \leq n/t, 
\]
proving the first two properties in the lemma. Since 
\[ |\ball_S(p,(1+1/t)R)| = |\ball(p,R_i)| \leq |\ball(p,e R')| \leq n/2 ,
\]
we have 
\[ |S \setminus \ball_S(p,(1+1/t)R)| \geq n/2 , 
\]
proving the third property in the lemma. 
\end{proof}

\section{The closest-pair algorithm} 
\label{secCP} 

Let $(P,\dist)$ be a finite metric space, let $N=|P|$, let $d$ be its 
doubling dimension, and let $\delta$ be its closest-pair distance. The 
recursive algorithm $\CP(P,N,d)$ returns the value of $\delta$. In a 
generic call, the algorithm takes a subset $S$ of $P$ as input and 
returns a value $\delta_0$ that is at least $\delta$. If the 
closest-pair distance in $S$ is equal to $\delta$, then 
$\delta_0 = \delta$. As before, in each recursive call, $d$ refers to the 
doubling dimension of the entire metric space $(P,\dist)$. 

\subsection{The algorithm}

\begin{figure}
\centering
\includegraphics{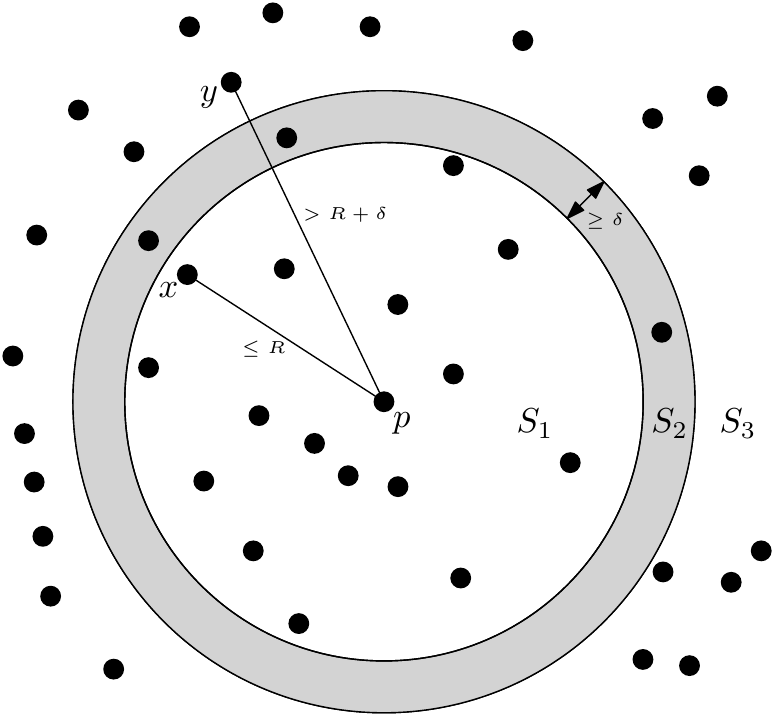}
\caption{The recursion of the closest pair algorithm.
The annulus $\ann_S(p, R, (1 + 1/t)R)$ splits $S$ into three
point sets $S_1$, $S_2$, and $S_3$. It has width at least
$\delta$ and contains $O(n^{1-1/d})$ points. The distance between
any point $x$ in $S_1$ and any point $y$ in $S_3$ is more than $\delta$.}
\label{fig:cp}
\end{figure}

Let $S$ be a subset of $P$ and let $n=|S|$. If $n$ is small, then 
algorithm $\CP(S,n,d)$ computes the closest-pair distance in $S$ by 
brute force. Otherwise, the algorithm runs $\ssannulus(S,n,d,t)$,
where $t$ is proportional to $n^{1/d}$. Consider the output $p \in S$ 
and $R>0$. By Lemmas~\ref{lemball} and~\ref{lemrann}, 
$\ann_S(p,R,(1+1/t)R)$ contains at most $n/t = O(n^{1-1/d})$ points 
of $S$ and its width is at least (the unknown value of) $\delta$,
see Figure~\ref{fig:cp}. 
Therefore, it suffices to generate two recursive calls, one on the 
points in $\ball_S(p,(1+1/t)R)$ and one on the points outside 
$\ball_S(p,R)$. The pseudocode is given below.    

\algAll{Algorithm $\CP(S,n,d)$}{
{\bf Comment:} The input is a subset $S$, of size $n \geq 2$, of the 
metric space $(P,\dist)$ of doubling dimension $d$. The algorithm 
returns a real number $\delta_0$ that satisfies the two properties in 
Lemma~\ref{lemcorrect}. 

\begin{quote}
\begin{tabbing}
{\bf if} $n < 2 (16e)^d$ \\ 
{\bf then} compute the closest-pair distance $\delta_0$ in $S$ by 
           brute force \\
{\bf else} \= $t = \lfloor \frac{1}{16e} (n/2)^{1/d} \rfloor$; \\ 
           \> let $p \in S$ and $R>0$ be the output of algorithm 
              $\ssannulus(S,n,d,t)$; \\ 
           \> $S_1 = \ball_S(p,R)$; \\
           \> $S_2 = \ann_S(p,R,(1+1/t)R)$; \\
           \> $S_3 = S \setminus (S_1 \cup S_2)$; \\
           \> $n' = | S_1 \cup S_2 |$; \\
           \> $n'' = | S_2 \cup S_3 |$; \\
           \> $\delta' = \CP(S_1 \cup S_2,n',d)$; \\
           \> $\delta'' = \CP(S_2 \cup S_3,n'',d)$; \\
           \> $\delta_0 = \min (\delta',\delta'')$ \\ 
{\bf endif}; \\
return $\delta_0$ 
\end{tabbing}
\end{quote}
}

Recall that $c = 2(4e)^d$ in algorithm $\ssannulus(S,n,d,t)$. 
Therefore, 
\begin{equation} \label{eqtc}  
  t = \left\lfloor \frac{1}{4} (n/c)^{1/d} \right\rfloor . 
\end{equation} 

Before we prove the correctness of algorithm $\CP$, we show that it 
terminates. Assume that $n \geq 2(16e)^d$. Then, $n \geq c+1$ and, by 
Lemma~\ref{lemrann}, $|S_1| \geq 2$ and $|S_3| \geq 2$. It follows 
that both $n'$ and $n''$ are at most $n-2$ and, thus, both recursive 
calls are on sets of sizes less than $n$. 

\begin{lemma}  \label{lemcorrect} 
Let $\delta$ be the closest-pair distance in $P$, let $S$ be a subset 
of $P$, let $n \geq 2$ be the size of $S$, and let $\delta_0$ be the 
output of algorithm $\textup{\CP}(S,n,d)$. Then, 
\begin{enumerate} 
\item $\delta_0 \geq \delta$ and 
\item if $\delta(S) = \delta$, then $\delta_0 = \delta$. 
\end{enumerate} 
\end{lemma}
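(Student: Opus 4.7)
The plan is to proceed by strong induction on $n = |S|$, proving both properties simultaneously. In the base case $n < 2(16e)^d$, the algorithm computes $\delta(S)$ exactly by brute force; since every pair of distinct points in $S \subseteq P$ has distance at least $\delta$, we have $\delta(S) \geq \delta$, with equality precisely when the closest pair of $P$ lies in $S$, so both properties hold.

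For the inductive step, assume $n \geq 2(16e)^d$. I would first check that the recursion is well-defined: $n \geq c+1$ where $c = 2(4e)^d$, so Lemma~\ref{lemrann} applies; the threshold guarantees $t \geq 1$; and Lemma~\ref{lemrann} parts~(1) and~(3) force $|S_1| \geq 2$ and $|S_3| \geq 2$, so $n', n'' \leq n-2 < n$ as already observed in the excerpt. Part~(1) then follows immediately: by the inductive hypothesis $\delta', \delta'' \geq \delta$, and hence $\delta_0 = \min(\delta',\delta'') \geq \delta$.

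For part~(2), suppose $\delta(S) = \delta$ and let $x,y \in S$ be distinct with $\dist(x,y) = \delta$. The strategy is to show that $\{x,y\} \subseteq S_1 \cup S_2$ or $\{x,y\} \subseteq S_2 \cup S_3$; since $S = S_1 \cup S_2 \cup S_3$ is a disjoint decomposition, the only case to rule out is $x \in S_1$ and $y \in S_3$, so it suffices to establish that the width $R/t$ of the separating annulus is at least $\delta$. I would first argue that $R \geq \delta/2$: otherwise any two distinct points of $\ball_P(p,R)$ would be at distance at most $2R < \delta$, forcing $|\ball_P(p,R)| \leq 1$; but Lemma~\ref{lemrann}(1) and $n \geq 2(16e)^d$ give $|\ball_S(p,R)| \geq n/c \geq 4^d \geq 2$, a contradiction. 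With $R \geq \delta/2$ in hand, Lemma~\ref{lemball} yields
\[
    n/c \leq |\ball_S(p,R)| \leq |\ball_P(p,R)| \leq (4R/\delta)^d,
\]
so $R \geq (\delta/4)(n/c)^{1/d}$; combining with $t \leq \tfrac{1}{4}(n/c)^{1/d}$ from~(\ref{eqtc}) gives $R/t \geq \delta$. The triangle inequality then gives, for $x \in S_1$ and $y \in S_3$,
\[
    \dist(x,y) \geq \dist(p,y) - \dist(p,x) > (1+1/t)R - R = R/t \geq \delta,
\]
contradicting $\dist(x,y) = \delta$. Hence one of $S_1 \cup S_2$ or $S_2 \cup S_3$ contains both $x$ and $y$ and thus realizes $\delta$ as its closest-pair distance; the inductive hypothesis gives $\delta' = \delta$ or $\delta'' = \delta$, so $\delta_0 = \delta$. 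The main obstacle is really this width-versus-$\delta$ bookkeeping—bootstrapping $R \geq \delta/2$ so that Lemma~\ref{lemball} is even applicable, and then threading the constants in $c$, $t$, and the brute-force threshold $2(16e)^d$ together so that $R/t \geq \delta$ holds; everything else is routine induction.
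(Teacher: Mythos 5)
Your proof is correct and takes essentially the same route as the paper: lower-bound $R$ so that the packing Lemma~\ref{lemball} gives $R \geq (\delta/4)(n/c)^{1/d} \geq \delta t$, rule out a closest pair split between $S_1$ and $S_3$ via the triangle inequality, and conclude by induction. The only cosmetic differences are that the paper derives $R \geq \delta$ directly (since $\ball_S(p,R)$ contains $p$ and at least one other point) rather than your $R \geq \delta/2$ diameter argument, and it proves part (1) without induction by observing that $\delta_0$ is always a distance between two distinct points of $S$; both variants are fine.
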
 
\begin{proof} 
The first claim holds, because the output $\delta_0$ is always the 
distance between some pair of distinct points in $S$. We prove the 
second claim by induction on $n$. This second claim obviously holds if 
$2 \leq n < 2 (16e)^d$. Assume that 
$n \geq 2 (16e)^d = 4^d c$ and $\delta(S) = \delta$. Moreover, 
assume that the second claim holds for all subsets of $S$ containing at
least two and less than $n$ points. Observe that $t \geq 1$. 

Consider the output $p \in S$ and $R>0$ of algorithm 
$\ssannulus(S,n,d,t)$. By Lemma~\ref{lemrann}, 
$|\ball_S(p,R)| \geq n/c >1$, which implies that $\ball_S(p,R)$ contains 
at least two points (with $p$ being one of them). It follows that 
$R \geq \delta$. Thus, by Lemma~\ref{lemball}, 
\[ |\ball_S(p,R)| \leq |\ball_P(p,R)| \leq (4R/\delta)^d . 
\]
By combining the two inequalities on $|\ball_S(p,R)|$, we get 
\[ n/c \leq (4R/\delta)^d ,
\]
which, using (\ref{eqtc}), implies that 
\[ R \geq (\delta/4) \cdot (n/c)^{1/d} \geq \delta t . 
\] 
The width of $\ann_S(p,R,(1+1/t)R)$ is equal to $R/t$, which is at least 
$\delta = \delta(S)$. It follows that the closest-pair distance in $S$ 
cannot be between one point in $S_1$ and one point in $S_3$. To prove 
this, let $x$ be a point in $S_1$ and let $y$ be a point in $S_3$. Then 
$\dist(p,x) \leq R$ and $\dist(p,y) > (1+1/t)R$, see Figure~\ref{fig:cp}. Thus, 
\[ (1+1/t)R < \dist(p,y) \leq \dist(p,x) + \dist(x,y) 
      \leq R + \dist(x,y) ,
\]
which implies $\dist(x,y) > R/t \geq \delta = \delta(S)$. 
It follows that the closest-pair distance in $S$ is within the set 
$S_1 \cup S_2$ or within the set $S_2 \cup S_3$. By the first claim 
in the lemma, both $\delta'$ and $\delta''$ are at least $\delta$. By 
the induction hypothesis, at least one of $\delta'$ and $\delta''$ is 
equal to $\delta$.  
\end{proof} 

Lemma~\ref{lemcorrect}, with $S=P$, proves that algorithm $\CP(P, N, d)$ 
returns the closest-pair distance in the set $P$: 

\begin{corollary} 
Let $(P,\dist)$ be a metric space of size $N \geq 2$, and let $d$ be 
its doubling dimension. The output of algorithm $\textup{\CP}(P,N,d)$ is 
the closest-pair distance in~$P$.  
\end{corollary}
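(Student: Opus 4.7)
The plan is to derive the corollary as an immediate specialization of Lemma~\ref{lemcorrect}, taking $S=P$. That lemma has two parts, and I would simply combine them in this special case.

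First, I would note that the algorithm is invoked with $S=P$ and $n=N\geq 2$, so Lemma~\ref{lemcorrect} applies directly to this call. Let $\delta$ denote the closest-pair distance in $P$ and let $\delta_0$ be the value returned by $\CP(P,N,d)$. The first part of Lemma~\ref{lemcorrect} yields $\delta_0\geq \delta$ with no further work.

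Second, I would verify the hypothesis of the second part of Lemma~\ref{lemcorrect}, namely $\delta(S)=\delta$. Since $S=P$, this equality holds by the very definition of $\delta$ as the closest-pair distance in $P$. Consequently, the second part of the lemma gives $\delta_0=\delta$, and combining with the first bound yields the claim.

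There is essentially no obstacle here: the only thing to check is that the hypothesis $\delta(S)=\delta$ is trivially satisfied at the outermost invocation, which is immediate. Thus the proof reduces to two lines invoking Lemma~\ref{lemcorrect} with $S=P$.
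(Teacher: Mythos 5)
Your proposal is correct and matches the paper's own argument: the corollary is obtained exactly by applying Lemma~\ref{lemcorrect} with $S=P$, where the hypothesis $\delta(S)=\delta$ holds trivially. Nothing further is needed.
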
 

It remains to analyze the expected running time of the algorithm. 
For any integer $n \geq 2$, let $T(n)$ denote the maximum expected 
running time of algorithm $\CP(S,n,d)$, on any subset $S$ of $P$ of size 
$n$. Below, we derive a recurrence for $T(n)$. 

Assume that $n \geq 2(16e)^d = 4^d c$. Consider the sets $S_1$, $S_2$, 
and $S_3$ that are computed in the call to $\CP(S,n,d)$. By 
Lemma~\ref{lemrann}, $|S_1| \geq n/c$, $|S_2| \leq n/t$, and 
$|S_3| \geq n/2$. Thus, the values of $n' = | S_1 \cup S_2 |$ and 
$n'' = | S_2 \cup S_3 |$ satisfy 
\begin{equation}  \label{eq1}  
     2 \leq n' \leq (1-1/c) n , 
\end{equation} 
\begin{equation}  \label{eq2}  
     2 \leq n'' \leq (1-1/c) n , 
\end{equation} 
and 
\begin{equation}  \label{eq3}  
     n' + n'' \leq n + n/t . 
\end{equation} 
Observe that even though $n'$ and $n''$ are random variables, their 
values always satisfy (\ref{eq1})--(\ref{eq3}). 

By Lemma~\ref{lemrann}, the expected running time of algorithm 
$\CP(S,n,d)$ is equal to the sum of $O(cn)$ and the total expected times 
for the two recursive calls. We assume for simplicity that the constant 
in $O(cn)$ is equal to $1$. Thus, we have 
\begin{equation}  \label{eqrec} 
 T(n) \leq cn + \max_{n',n''} \left( T(n') + T(n'') \right) ,
\end{equation} 
where the maximum ranges over all $n'$ and $n''$ that satisfy 
(\ref{eq1})--(\ref{eq3}). 

If we replace (\ref{eq3}) by $n' + n'' \leq n$, then (\ref{eqrec}) 
is the standard merge-sort recurrence, whose solution is $O(n \log n)$. 
In Section~\ref{secsolve}, we will prove that, even with (\ref{eq3}),
$T(n) = O(n \log n)$, where the constant factor depends only on the 
doubling dimension of $P$. This will prove the main result of this paper: 

\begin{theorem} 
Let $(P,\dist)$ be a metric space of size $N \geq 2$, and let $d$ be 
its doubling dimension. Assume that $d$ does not depend on $N$. 
The closest-pair distance in $P$ can be computed in 
$O(N \log N)$ expected time. The constant factor in this time bound   
depends only on $d$. 
\end{theorem}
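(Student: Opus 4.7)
Correctness is immediate from the preceding corollary, so it suffices to establish the running-time bound. By the discussion preceding the theorem, the expected running time $T(n)$ of $\textup{\CP}(S,n,d)$ satisfies the recurrence~(\ref{eqrec}) subject to the constraints~(\ref{eq1})--(\ref{eq3}), where $c = 2(4e)^d$ and $t = \lfloor (n/c)^{1/d}/4 \rfloor$. The plan is to show, by strong induction on $n$, that $T(n) \leq A n \log n$ for a constant $A$ that depends only on $d$.

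For the inductive step, the key observation is that $x \log x$ is increasing, so
\[
  T(n') + T(n'') \;\leq\; A\bigl(n' \log n' + n'' \log n''\bigr)
                       \;\leq\; A(n' + n'') \log \max(n', n'').
\]
Using $\max(n', n'') \leq (1-1/c)n$ from~(\ref{eq1})--(\ref{eq2}) and $n' + n'' \leq n + n/t$ from~(\ref{eq3}), and writing $\varepsilon = -\log(1 - 1/c) = \log(c/(c-1)) > 0$, I would obtain
\[
  T(n) \;\leq\; cn + A(n + n/t)(\log n - \varepsilon)
        \;\leq\; A n \log n + cn - A\varepsilon n + A(n/t)\log n.
\]
Thus $T(n) \leq A n \log n$ follows as long as $c - A\varepsilon + A \log n / t \leq 0$, i.e., $A \geq c / (\varepsilon - \log n / t)$.

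The main obstacle is that $\log n / t$ is not automatically small: for $n$ just above the base-case threshold $2(16e)^d$, we only know $t \geq 1$. However, since $t = \Theta(n^{1/d})$ for large $n$, one has $\log n / t = O(\log n / n^{1/d}) \to 0$. Concretely, I would choose a constant $n^* = n^*(d)$ large enough that $\log n / t \leq \varepsilon / 2$ for all $n \geq n^*$; then $A = 2c/\varepsilon$ makes the inductive step go through for $n \geq n^*$. For $n < n^*$, the algorithm takes only constant expected time (depending on $d$ through $n^*$), because the recursion tree has size bounded by a function of $n^*$; enlarging $A$ if necessary ensures $T(n) \leq A n \log n$ for all $2 \leq n < n^*$ as well. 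Since every quantity in sight (namely $c$, $\varepsilon$, $n^*$, and the base-case constant) depends only on $d$, we conclude $T(N) = O(N \log N)$ with a constant factor depending only on $d$, as claimed.
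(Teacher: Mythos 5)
Your proposal is correct and follows essentially the same route as the paper: a strong induction establishing $T(n) \leq A n \log n$ from the recurrence (\ref{eqrec}) with constraints (\ref{eq1})--(\ref{eq3}), where the savings come from $\log((1-1/c)n) = \log n - \varepsilon$, the excess term $A(n/t)\log n$ is absorbed for $n$ above a threshold depending only on $d$ (since $t = \Theta(n^{1/d})$), and small $n$ are handled by enlarging $A$. The only differences are cosmetic bookkeeping: you keep $\varepsilon = -\log(1-1/c)$ and an implicit threshold $n^*$, whereas the paper uses $\ln(1-x)\leq -x$ and an explicit $N_0 = e^{\alpha(d+1)!}$.
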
 

\subsection{Solving the recurrence}  \label{secsolve}  
Throughout this section, we assume for simplicity that $d$ is an integer. 
(If this is not the case, then we replace $d$ by $\lceil d \rceil$.)  
Before we turn to the recurrence (\ref{eqrec}), we derive some 
inequalities that will be used later. 

Recall the definition of $t$, see (\ref{eqtc}). If 
$n \geq 2(32e)^d = 8^d c$, then 
\[ t = \left\lfloor \frac{1}{4} (n/c)^{1/d} \right\rfloor \geq  
    \frac{1}{4} (n/c)^{1/d} - 1 \geq 
    \frac{1}{8} (n/c)^{1/d} ,  
\]
which implies that 
\begin{equation} \label{eq4} 
     n/t \leq 8 c^{1/d} n^{1-1/d} . 
\end{equation} 
Since 
\[ \lim_{n \rightarrow \infty} \frac{n}{\ln^d n} = \infty ,
\]
there exists an $N_0$ such that for all $n \geq N_0$, 
\begin{equation} \label{eq5} 
   n \geq 16^d c^{d+1} \ln^d n .     
\end{equation} 
We claim that $N_0 = e^{\alpha (d+1)!}$, where $\alpha = 16^d c^{d+1}$,
has this property. To prove this, let $m \geq \alpha (d+1)!$. Then 
\[ e^m = \sum_{k=0}^{\infty} \frac{m^k}{k!} \geq \frac{m^{d+1}}{(d+1)!}
      \geq \alpha m^d 
\]
and, thus, if $n \geq N_0$, 
\[ n = e^{\ln n} \geq \alpha \ln^d n . 
\] 
Define $A$ to be the maximum of $2 c^2$ and 
\[ \max \left\{ \frac{T(k)}{k \ln k} : 2 \leq k < N_0 \right\} . 
\] 
Observe that $A$ only depends on $d$. 

We will prove that for all integers $n$ with $2 \leq n \leq N$,
\begin{equation}   \label{eq6} 
   T(n) \leq A n \ln n . 
\end{equation} 
The proof is by induction on $n$. If $2 \leq n < N_0$, then (\ref{eq6})
follows from the definition of $A$. 

Let $n \geq N_0$, and assume that (\ref{eq6}) holds for all values less 
than $n$. Let $n'$ and $n''$ be two integers that satisfy 
(\ref{eq1})--(\ref{eq3}). By the induction hypothesis, we have 
\[ T(n') \leq A n' \ln n' \leq A n' \ln ( (1-1/c)n ) 
\]
and  
\[ T(n'') \leq A n'' \ln n'' \leq A n'' \ln ( (1-1/c)n ) ,
\]
implying that 
\[ T(n') + T(n'') \leq A (n' + n'') \ln ( (1-1/c)n )  
  \leq A ( n + n/t ) \ln ( (1-1/c)n ) . 
\]
From (\ref{eq4}), we get
\begin{eqnarray*} 
  T(n') + T(n'') & \leq &  
      A \left( n + 8 c^{1/d} n^{1-1/d} \right)  \ln ( (1-1/c)n ) \\ 
  & = & A n \ln n + A n \ln(1-1/c) + 
        8A c^{1/d} n^{1-1/d} \ln ((1-1/c)n) \\ 
  & \leq & 
   A n \ln n + A n \ln(1-1/c) + 8A c^{1/d} n^{1-1/d} \ln n \\ 
  &\leq  & A n \ln n - A n/c + 8A c^{1/d} n^{1-1/d} \ln n  ,
\end{eqnarray*} 
where in the last step we used the inequality 
$\ln(1 - x) \leq -x$, which is valid for all 
real numbers $x$ with $x < 1$.
By the definition of $A$, we have $A \geq 2 c^2$, implying that 
\[ A/c - c \geq A/(2c) . 
\] 
Thus, 
\[ cn + T(n') + T(n'') \leq 
   A n \ln n - A n /(2c) + 8A c^{1/d} n^{1-1/d} \ln n . 
\] 
By (\ref{eq5}), we have 
\[ n^{1/d} \geq 16 c^{1+1/d} \ln n 
\]
and, therefore, 
\[ 8A c^{1/d} n^{1-1/d} \ln n \leq An/(2c) . 
\] 
We conclude that 
\[ cn + T(n') + T(n'') \leq A n \ln n . 
\] 

Since $n'$ and $n''$ were arbitrary integers satisfying 
(\ref{eq1})--(\ref{eq3}), we have shown that (\ref{eq6}) 
holds for the current value of $n$. Thus, (\ref{eq6}) holds 
for all integers $n$ with $2 \leq n \leq N$.

\section{Concluding remarks} 
We have presented a very simple randomized algorithm for computing the 
closest-pair distance in metric spaces of small doubling dimension. 
The algorithm only uses the following operations: 
\begin{enumerate} 
\item For any given point $p$, count or determine all points that are 
      within a given distance from $p$, or within a given range of 
      distances from $p$. This operation can obviously be done in 
      linear time, by simply scanning the sequence of distances between 
      $p$ and all points.  
\item For a given sequence of $n$ real numbers, find the $k$-th smallest 
      element in this sequence. This operation can be done in expected 
      linear time, again by a simple randomized algorithm; see 
      Cormen \emph{et al.}~\cite[Chapter~9]{clrs-ia-09} and 
      Kleinberg and Tardos~\cite[Section~13.5]{kt-ad-06}. 
\end{enumerate} 

\section*{Acknowledgements}
This research was carried out at the \emph{Eighth Annual Workshop on 
Geometry and Graphs}, held at the Bellairs Research Institute in 
Barbados, January 31 -- February 7, 2020. The authors are grateful to 
the organizers and to the participants of this workshop. 

\bibliographystyle{plain}
\bibliography{ClosestPairDoubling}

\begin{thebibliography}{1}

\bibitem{ahp-ncsspd-12}
M.~A. Abam and S.~Har-Peled.
\newblock New constructions of {SSPDs} and their applications.
\newblock {\em Computational Geometry: Theory and Applications}, 45:200--214,
  2012.

\bibitem{a-pldr-83}
P.~Assouad.
\newblock Plongements lipschitziens dans $\mathbb{R}^{N}$.
\newblock {\em Bulletin de la Soci{\'{e}}t{\'{e}} Math{\'{e}}matique de
  France}, 111:429--448, 1983.

\bibitem{b-phd-76}
J.~L. Bentley.
\newblock {\em Divide and Conquer Algorithms for Closest Point Problems in
  Multidimensional Space}.
\newblock Ph.{D}. thesis, Department of Computer Science, University of North
  Carolina, Chapel Hill, N.C., 1976.

\bibitem{b-cacm-80}
J.~L. Bentley.
\newblock Multidimensional divide-and-conquer.
\newblock {\em Communications of the ACM}, 23:214--229, 1980.

\bibitem{bs-dcms-76}
J.~L. Bentley and M.~I. Shamos.
\newblock Divide-and-conquer in multidimensional space.
\newblock In {\em Proceedings of the 8th ACM Symposium on the Theory of
  Computing}, pages 220--230, 1976.

\bibitem{clrs-ia-09}
T.~H. Cormen, C.~E. Leiserson, R.~L. Rivest, and C.~Stein.
\newblock {\em Introduction to Algorithms}.
\newblock MIT Press, 3rd edition, 2009.

\bibitem{hm-fcnld-05}
S.~Har-Peled and M.~Mendel.
\newblock Fast construction of nets in low-dimensional metrics and their
  applications.
\newblock {\em SIAM Journal on Computing}, 35:1148--1184, 2006.

\bibitem{h-lams-01}
J.~Heinonen.
\newblock {\em Lectures on Analysis on Metric Spaces}.
\newblock Springer-Verlag, 2001.

\bibitem{kt-ad-06}
J.~Kleinberg and E.~Tardos.
\newblock {\em Algorithm Design}.
\newblock Addison-Wesley, 2006.

\end{thebibliography}

\end{document}